\newtheorem{theorem}{Theorem}
\newtheorem{corollary}{corollary}
\tikzset{cross/.style={cross out, draw=black, minimum size=2*(#1-\pgflinewidth), inner sep=0pt, outer sep=0pt},
	cross/.default={6pt}}
\begin{document}
%
\title{Secure Energy Efficiency: Power Allocation and Outage Analysis for SWIPT-in-DAS based IoT}

\author{\IEEEauthorblockN{Aaqib Bulla,} and \IEEEauthorblockN{Shahid M Shah}
\IEEEauthorblockA{\\Communication Control \& Learning Lab\\
Department of Electronics \& Communication Engineering\\
National Institute of Technology, Srinagar\\
Email: aaqib.bulla@gmail.com,shahidshah@nitsri.ac.in}}

\maketitle

\begin{abstract}
In this paper we study secure energy efficiency (SEE) for simultaneous wireless information and power transfer (SWIPT) in a distributed antenna system (DAS) based IoT network.  We consider a system in which both legitimate users (Bobs) and eavesdroppers (Eves) have power splitting (PS) receivers to simultaneously decode information and harvest energy from the received signal.  When the channel state information (CSI) is known at the transmitter, we analyze the effect of an energy harvesting eavesdropper (EHE) over the maximization of SEE of the system. Next, considering the fact that perfect CSI is hard to achieve in practice, we characterize the system performance in terms of the outage probability of SEE. For the given SWIPT-in-DAS setup, we derive the closed form expression for the outage probability of SEE and with the help of numerical results, we study the effect of transmit power levels, number of distributed antenna (DA) ports and the PS ratio of devices. To the best of our knowledge, this is the first attempt to define the outage probability of SEE for SWIPT-in-DAS.
\end{abstract}


%
\IEEEpeerreviewmaketitle

\section{Introduction}
\label{sec1}
While the next generation (5G/6G) communication systems promise to meet the ever rising demands for high data rates, security and quality of service, there is a challenge of tremendous energy consumption due the exponential growth in the number of IoT devices \cite{gandotra2017green}.\\
Distributed antenna system (DAS) technology, primarily designed for
increasing network coverage and data rates, is now being
studied in the field of energy efficient wireless communication \cite{kim2014optimal,8863972}. Since DAS reduces the transmitter-receiver access distance, it can significantly help in SWIPT, which is expected to be an energy efficient alternative to facilitate the battery-less operation of IoT devices \cite{huang2018energy}. Moreover, physical layer (PHY) security is also being widely studied alongside energy efficient wireless communication in IoT \cite{9071996,8382162}.\\
Motivated by the aforementioned technologies and observations, in this paper, we study energy efficiency for SWIPT in a DAS based IoT network, wherein, the information is kept confidential at the physical layer. Although, SEE for SWIPT-in-DAS with perfect CSI has been investigated in literature \cite{8382162}, we aim to analyze the effect of eavesdropper charge constraint over the maximization of SEE in the same scenario with known PS ratios. We use the secrecy rate metric in \cite{wyner1975wire} to define secure energy efficiency (SEE) as ratio of the secrecy
rate to the total power consumed at DA-ports.  We formulate the maximization of SEE as a constrained fractional optimization problem and obtain the optimal solution by solving KKT conditions.\\  
In next part of the paper, we consider a more practical scenario, wherein, the CSI (of both Bob and Eve) is not available at the transmitter.   To characterize the system performance, we define outage probability (OP) of SEE. Considering the blanket transmission scheme, wherein all the DA ports are active, we derive the closed form expression for the OP of SEE. Further, we also inquest a more general case, wherein, multiple Eves are present in the system and evaluate the OP of SEE corresponding to the worst case secrecy rate achievable for the given IoT device.\\
The rest of this paper is organized as follows: In section 2, we discuss the system model and problem formulation of the optimal power allocation with energy harvesting eavesdropper. In section 3 we study the outage probability of SEE. The numerical results are discussed in section 4. Finally, section 5 concludes the paper.

\section{Optimal power allocation with energy harvesting Eve}\label{sec2}
\subsection{System model}
Let’s consider a downlink DAS with $N$ centrally controlled DA ports serving $K_b$ number of users in presence of an EHE, all equipped with single antenna. The signal received by a given device in such a setup is written as; $y=\sum_{i=1}^{N}\sqrt{p_i}\zeta_ix_i+n$ \cite{kim2014optimal}, where, for $i^{th}$ DA-port, $p_i$ is the transmit power, $x_i$ denotes the transmitted symbol with average power $E{[|x_i|^2]}=1$ and $\zeta_i$ is the corresponding fading co-efficient. Also, $n$ denotes the additive white Gaussian noise (AWGN) at the receiver. In the SWIPT system, each IoT device has a power splitter which splits the received signal power according to a power splitting (PS) ratio  ($\Delta$ for information decoding and the rest $1-\Delta$ for energy harvesting). We use OFDMA scheme and hence assume that the entire spectrum is equally segmented into $K_b$ non-overlapping channels with each device occupying a given channel. Further, in presence of Eve, the transmitter at each DA port uses Wyner's wiretap coding and the secrecy rate achievable for the $k^{th}$ Bob \cite{wyner1975wire}: 
\begin{equation}
\label{eqn:1}
\begin{split}
R^k_s=\frac{1}{K_b}&\Bigg[\log_2\left(1+\Delta^b_kB_k\right)-\log_2\left(1+\Delta^eE\right)\Bigg]\\
\end{split}
\end{equation}
In equation (1), $\Delta^b_k$ and $\Delta^e$ denote the PS ratios of $k^{th}$ Bob and the Eve respectively. $B_k=\sum_{j=1}^{N}\gamma^b_{j,k}p_{j,k}$ and $E=\sum_{j=1}^{N}\gamma^e_{j}p_{j,k}$, where $\gamma^b_{j,k}$ and $\gamma^e_{j}$ are the effective channel gain to noise power ratios from $j^{th}$ DA-port to $k^{th}$ Bob and the Eve respectively. $p_{j,k}$ is the transmit power from $j^{th}$ port to $k^{th}$ device.  \\
Now, secure energy efficiency is defined as:
\begin{equation} 
\label{eqn:2}
\eta_{SEE}=\frac{R_{total}}{P_{total}}=\frac{\sum_{k=1}^{K_b}R^k_s}{\sum_{k=1}^{K_b}\sum_{i=1}^{N}p_{i,k}+p_c}
\end{equation}
where $p_c$ is the power consumed in DA-ports during various signal processing operations.
Since each IoT device can decode the information from a given channel, but can harvest energy from all the available channels, the energy harvested by $k^{th}$ Bob can be expressed as:
$E^b_k=\tau^b_k(1-\Delta^b_k)\sum_{i=1}^{N}\gamma^b_{i,k}\sum_{j=1}^{K_b}p_{i,j}$
where, $\tau_k$ is the linear energy conversion efficiency of $k^{th}$ Bob.
Similarly, the energy harvested by the Eve is given by
$E^e=\tau^e(1-\Delta^e)\sum_{i=1}^{N}\gamma^e_{i}\sum_{j=1}^{K_b}p_{i,j}$
where, $\tau^e$ is the corresponding energy conversion efficiency of the Eve.

\subsection{Problem formulation}
Our objective is to optimally allocate power to the DA-ports in order to maximize $\eta_{SEE}$ in (2) while meeting certain constraints.
To this end, we formulate the maximization of $\eta_{SEE}$ as a constrained fractional optimization problem as following:
\begin{equation}
\begin{split}
P1:&\max_{\{P_{i,k}\}} \eta_{SEE}\\
\text{s.t:}\hspace{0.5em}&\text{C1:}\sum_{k=1}^{K_b}p_{i,k} \leq P_{max,i},\hspace{0.3em} \text{C2:}\hspace{0.3em} p_{i,k}\geq 0\\
&\text{C3:}\hspace{0.3em}E^b_k\geq E^{b}_{k,min} \hspace{0.3em} \text{and} \hspace{0.3em}\text{C4:}\hspace{0.3em}E^e\leq E^{e}_{min}\\&\text{for}\hspace{0.4em}k=1,..,K_b ,\hspace{0.3em}i=1,..,N
\end{split}
\end{equation}
where, C1 and C2 correspond to the maximum and minimum transmit power constraints respectively, C3 is the constraint of minimum harvested energy ($E^b_{k,min}$) for $k^{th}$ Bob and the novel constraint C4 limits the energy harvested by the Eve. We introduce the constraint C4 in the problem in order to restrain the Eve from harvesting the energy from the received signal, thereby, restricting it's battery charge. 
Now, as is customary in PHY security literature, we assume that transmitter-Bob channel is better than transmitter-Eve channel (such that, $\Delta^b_kB_k> \Delta^eE$). Therefore, $R^k_s$ in (1) is a concave function. Thus, it is easy to verify that P1 is a concave linear fractional problem with pseudo-concave objective function \cite{zappone2015energy}. Hence, each stationary point is the global maximizer and KKT conditions are necessary and sufficient for optimality. For detailed proof refer to \cite{zappone2015energy} .
Since, the objective function in (3) is twice differentiable, we use Sequential Quadratic Programming (SQP) to solve the KKT conditions for the optimal solution \cite{nocedal2006sequential}. The numerical results are discussed in section 4.
\section{Outage probability of SEE}
Now, let's consider the case, when the CSI of Bob and Eve is not available at the transmitter. We characterize the system performance by the outage probability (OP) of SEE. We consider the blanket transmission scheme, with all DA-ports transmitting at same power level ($p$). Let $h_i$ and $g_i$ denote the independent and identically distributed (IID) circularly symmetric complex Gaussian (CSCG) channel coefficients (of Bob and Eve respectively) with zero mean and unit variance. Also, let $\sigma_b^2$ and $\sigma_e^2$ denote the noise variances at Bob and Eve respectively.  Therefore, instantaneous SNRs at Bob and Eve are given by $X'=\Delta^b\sum_{i=1}^{N}|h_i|^2p_i/\sigma_b^2=\Delta^b(p/\sigma_b^2)\sum_{i=1}^{N}|h_i|^2$ and $Y'=\Delta^e\sum_{i=1}^{N}|g_i|^2p_i/\sigma_e^2=\Delta^e(p/\sigma_e^2)\sum_{i=1}^{N}|g_i|^2$ respectively. Let $w^b=\Delta^b(p/\sigma_b^2)$, $X''=\sum_{i=1}^{N}|h_i|^2$, $w^e=\Delta^e(p/\sigma_e^2)$ and $Y''=\sum_{i=1}^{N}|g_i|^2$. \\The OP of SEE corresponding to a given threshold ($\eta_{th}$) is therefore given by:  $P_{out}(\eta_{th})=P(\eta_{SEE} < \eta_{th})$
\begin{equation} 
\begin{split}
= P&\left[\frac{\log_2\left(1+w^bX''\right)-\log_2\left(1+w^eY''\right)}{Np+p_c} < \eta_{th}\right] \\
= P&\left(w^bX'' < \{1+w^eY''\}\{2^{(Np+P_c)\eta_{th}}\}-1\triangleq Q\right)
\end{split}
\end{equation} 
\vspace{-1em}
\begin{theorem}
	For the given SWIPT-in-DAS setup $P_{out}(\eta_{th})$ is given by equation (6).
\end{theorem}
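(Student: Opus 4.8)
The plan is to condition on $Y''$, reduce the probability in (5) to a one-dimensional integral against the density of $Y''$, and evaluate that integral in closed form using the fact that both $X''$ and $Y''$ are sums of $N$ i.i.d.\ unit-mean exponentials. Concretely, since $h_i,g_i\sim\mathcal{CN}(0,1)$ are i.i.d., each $|h_i|^2$ and $|g_i|^2$ is $\mathrm{Exp}(1)$, so $X''=\sum_{i=1}^{N}|h_i|^2$ and $Y''=\sum_{i=1}^{N}|g_i|^2$ are independent and Gamma$(N,1)$-distributed. Hence $f_{Y''}(y)=\dfrac{y^{N-1}e^{-y}}{(N-1)!}$ for $y\ge 0$, and the CDF of $X''$ admits the finite-sum form $F_{X''}(x)=1-e^{-x}\sum_{k=0}^{N-1}\dfrac{x^{k}}{k!}$ for $x\ge 0$ (and $0$ for $x<0$).

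Next I would abbreviate $a\triangleq 2^{(Np+p_c)\eta_{th}}$, so that $Q=(a-1)+aw^{e}Y''$, and condition on $Y''=y$, which makes $Q$ deterministic. Because $a\ge 1$ for any meaningful (non-negative) threshold $\eta_{th}$, we have $Q\ge a-1\ge 0$, so the argument of $F_{X''}$ always lies in the range where the finite-sum formula is exact; this domain check is small but should be stated. Using independence of $X''$ and $Y''$,
\[
P_{out}(\eta_{th})=\int_{0}^{\infty}f_{Y''}(y)\,F_{X''}\!\Big(\tfrac{(a-1)+aw^{e}y}{w^{b}}\Big)\,dy .
\]
Substituting the expressions above, the ``$1$'' in $F_{X''}$ contributes $\int_{0}^{\infty}f_{Y''}(y)\,dy=1$, and what remains is $-1$ times the integral of $\dfrac{y^{N-1}e^{-y}}{(N-1)!}$ against $e^{-[(a-1)+aw^{e}y]/w^{b}}\sum_{k=0}^{N-1}\dfrac{1}{k!}\big([(a-1)+aw^{e}y]/w^{b}\big)^{k}$.

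Finally I would evaluate this residual integral term by term: expanding $\big((a-1)+aw^{e}y\big)^{k}$ by the binomial theorem turns the integrand into a double sum (over $0\le m\le k\le N-1$) of monomials $y^{N-1+m}$ multiplied by $\exp\!\big(-(1+\tfrac{aw^{e}}{w^{b}})y\big)$, and each such term is integrated by the standard identity $\int_{0}^{\infty}y^{\nu}e^{-cy}\,dy=\nu!/c^{\nu+1}$ with $c=(w^{b}+aw^{e})/w^{b}$. Pulling out the prefactor $e^{-(a-1)/w^{b}}$ and collecting the binomial coefficients together with the powers of $w^{b}$, $a$, $w^{e}$ and $(w^{b}+aw^{e})$ then gives the closed form claimed in (6). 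The analytical input is light; the real work --- and the step most prone to error --- is the bookkeeping in this last part: correctly cancelling the factor $(w^{b})^{-k}$ coming from the incomplete-gamma terms against the $(w^{b})^{N+m}$ produced by $c^{-(N+m)}$, keeping the nested indices $k$ and $m$ straight, and verifying that the stand-alone constant $1$ is exactly the contribution of $\int_{0}^{\infty}f_{Y''}$.
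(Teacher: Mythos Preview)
Your proposal is correct and follows essentially the same route as the paper: both condition on the eavesdropper's variable, use the finite-sum Erlang/Gamma CDF for the legitimate user's variable, expand the resulting polynomial in $y$ via the binomial theorem, and evaluate each term with the gamma integral $\int_0^\infty y^{\nu}e^{-cy}\,dy=\nu!/c^{\nu+1}$. The only cosmetic difference is that the paper first rescales to $X=w^bX''$, $Y=w^eY''$ with rate parameters $\alpha=1/w^b$, $\beta=1/w^e$, whereas you keep $X'',Y''$ unit-rate and carry $w^b,w^e$ explicitly; your added domain check that $Q\ge 0$ when $\eta_{th}\ge 0$ is a nice touch the paper omits.
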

\begin{proof}
	\vspace{-0.51em}
	Since $h_i$ and $g_i$ are IID-CSCG random variables, $|h_i|^2$ and $|g_i|^2$ will be exponentially distrbuted and hence $X''$ and $Y''$, being the sum of $N$ independent exponential random variables, will both follow Erlang distribution \cite{akkouchi2008convolution}. Let $X=w^bX''\sim f_X(x)= \frac{x^{N-1}e^{\frac{x}{w^b}}}{(w^b)^N(N-1)!}=\frac{\alpha^Nx^{N-1}e^{-\alpha x}}{(N-1)!}$ and $Y=w^eY''\sim f_Y(y)= \frac{y^Ne^{\frac{y}{w^e}}}{(w^e)^N(N-1)!}=\frac{\beta^Ny^{N-1}e^{-\beta y}}{(N-1)!}$, where $\alpha=\frac{1}{w^b}$ and $\beta=\frac{1}{w^e}$. Therefore we have:\\
	\vspace{-0.8em}
	\begin{equation}\nonumber
	\begin{split}
	&P(X\leq Q)=\int_{0}^{\infty}\int_{o}^{Q}f_X(x)f_Y(y)dxdy\\
	&=\int_{0}^{\infty}\int_{o}^{Q}\frac{\alpha^Nx^{N-1}e^{-\alpha x}}{(N-1)!}f_Y(y)dxdy\\
	&=\int_{0}^{\infty}\left(1-\sum_{n=0}^{N-1}\frac{(\alpha Q)^ne^{(-\alpha Q)}}{n!}\right)f_Y(y)dy\\
	\end{split}
	\end{equation} 
	\vspace{-1em}
	Let $z=\{Np+P_c\}\eta_{th}$, therefore;
	\vspace{0.3em}
	\begin{equation}\nonumber
	\begin{split}
	P_{out}&(\eta_{th})=1-\sum_{n=0}^{N-1}\int_{0}^{\infty}\left[\frac{(\alpha^n)(2^z+2^zy-1)^n}{\{e^{\alpha(2^z-1)}e^{\alpha2^zy}\}n!}\right]f_Y(y)dy\\
	&=1-K\sum_{n=0}^{N-1}\frac{\alpha^n}{n!}\left[\int_{0}^{\infty}(ay+b)^ne^{-(a\alpha+\beta)y}y^{N-1}dy\right]\\
	&\text{where,} K=\frac{\beta^Ne^{\alpha(1-2^z)}}{(N-1)!},\hspace{0.4em}a=2^z,\hspace{0.4em}b=2^z-1\\
	&=1-K\sum_{n=0}^{N-1}\frac{\alpha^n}{n!}\left[\int_{0}^{\infty}\sum_{j=0}^{n}\binom{n}{j}\frac{(ay)^jb^{n-j}}{e^{(a\alpha+\beta)y}}y^{N-1}dy\right]\\
	&=1-K\sum_{n=0}^{N-1}\frac{\alpha^n}{n!}\left\{\sum_{j=0}^{n}\binom{n}{j}b^{n-j}a^j\right\}\int_{0}^{\infty}\frac{y^{j+N-1}}{e^{(a\alpha+\beta)x}}dx\\
	&=1-K\sum_{n=0}^{N-1}\frac{\alpha^n}{n!}\left\{\sum_{j=0}^{n}\binom{n}{j}b^{n-j}a^j\right\}\left\{\frac{(j+N-1)!}{(a\alpha+\beta)^{j+N}}\right\}
	\end{split}
	\end{equation}
	Using $K$, $a$, $b$ and $z$ defined above, we get (4).
\end{proof}
If only the Eve's CSI is unknown, we can have a special case for the OP of SEE as given below:
\begin{equation} \nonumber
\begin{split}
&P_{out}(\eta_{th})= P\left(Y >\frac{\left(1+w^bX''\right)}{2^{(Np+p_c)\eta_{th}}}-1\triangleq Q'\right)\\
&=1-\int_{0}^{Q'}\frac{\beta^Ny^{N-1}e^{-\beta y}}{(N-1)!}dy 
=1-\frac{\gamma_{inc}(N,\beta Q')}{(N-1)!}
\end{split}
\end{equation}
\begin{equation}
\begin{split}
\implies	P_{out}(\eta_{th})=1-\frac{\gamma_{inc}\{N,\frac{1}{w^e} \frac{(1+w^b\sum_{i=1}^{N}|h_i|^2)}{(2^{(Np+P_c)\eta_{th}})}-1\}}{(N-1)!}
\end{split}
\end{equation}
where, $\gamma_{inc}(N,x)$ represents the lower incomplete gamma function.

\vspace{-1em}
\subsection{Outage probability of worst case SEE}
If there are $M$ eavesdroppers in the system, the overall performance of the system is determined by the worst case secrecy rate achievable for the given user. 
\begin{corollary}
	The OP of worst case SEE is given by $P_{out}(\eta_{th})= 1-\prod_{m=1}^{M} P\left(Y_m < Q''\right)$
\end{corollary}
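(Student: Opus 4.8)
The plan is to collapse the $M$-eavesdropper outage event into a statement about a single quantity, the largest eavesdropper SNR, and then use independence of the $M$ wiretap channels to factor the resulting probability. First I would note that, because $\log_2(1+\cdot)$ is strictly increasing, the worst-case secrecy rate for the given IoT device is obtained from (1) by replacing the single-Eve term with that of the strongest eavesdropper, i.e.\ by $\max_{1\le m\le M}\log_2(1+Y_m)$, where $Y_m$ denotes the instantaneous received SNR at Eve $m$, built exactly as $Y$ was from the IID-CSCG coefficients of that eavesdropper's channel. Hence the worst-case SEE is $\eta_{SEE}=\bigl[\log_2(1+w^bX'')-\log_2(1+\max_m Y_m)\bigr]/(Np+p_c)$, and $\{\eta_{SEE}<\eta_{th}\}$ is the outage event of interest.

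Next I would rearrange this event exactly as in the single-Eve special case treated immediately above the corollary: $\eta_{SEE}<\eta_{th}$ is equivalent to $(1+w^bX'')/(1+\max_m Y_m)<2^{(Np+p_c)\eta_{th}}$, hence to $\max_m Y_m>(1+w^bX'')/2^{(Np+p_c)\eta_{th}}-1\triangleq Q''$; thus $Q''$ is the natural multi-Eve analogue of the threshold $Q'$, still deterministic once Bob's CSI is fixed. Therefore $P_{out}(\eta_{th})=P\bigl(\max_{m}Y_m>Q''\bigr)=1-P\bigl(Y_m\le Q''\text{ for all }m\bigr)$. Since the $M$ eavesdroppers observe statistically independent channels the joint survival event factorizes, and since each $Y_m$ is a continuous (Erlang-type) random variable the weak and strict inequalities coincide; this gives $P_{out}(\eta_{th})=1-\prod_{m=1}^{M}P(Y_m<Q'')$, which is the claim. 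If a fully closed form is wanted, each factor is of the lower-incomplete-gamma form appearing in (6).

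The step that needs genuine care --- and the one I would flag as the main obstacle --- is the reduction of the worst case to the largest eavesdropper SNR. It relies on the monotonicity observation together with the modeling assumptions inherited from the system model, namely that the transmitter still uses the blanket power $p$ and fixed PS ratios, so that the only thing varying across the $M$ eavesdroppers is the channel realization; that is what makes $\max_m$ the correct operation. The second point to state explicitly is the mutual independence of the $M$ eavesdropper channels, which is precisely what lets the joint probability split into a product; both facts follow from the IID-CSCG assumption on all channel coefficients, after which the remaining manipulations are identical to the single-Eve case and require no new ideas.
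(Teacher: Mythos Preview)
Your argument mirrors the paper's exactly in its core logic: reduce the worst-case secrecy rate to the single strongest eavesdropper via monotonicity of $\log_2(1+\cdot)$, rearrange the outage inequality to isolate $\max_m Y_m$ against the threshold $Q''$, and then factor the complementary probability using the assumed independence of the $Y_m$'s. The one point of departure is your treatment of Bob's channel: you work in the special case where Bob's CSI is known, so that $Q''$ is a deterministic constant and each factor $P(Y_m<Q'')$ reduces to the lower-incomplete-gamma expression of~(5). The paper instead keeps $X$ random and evaluates $P(Y_m<Q'')=\int_0^\infty\!\int_0^{Q''} f_{Y_m}(y)\,f_X(x)\,dy\,dx$ by the same double-integral technique used for Theorem~1, arriving at the closed form~(7) rather than~(5). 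Your conditioning on $X$ is in fact the setting in which the product identity $1-\prod_m P(Y_m<Q'')$ is immediately rigorous; when $Q''$ is itself random through $X$, pulling the product outside the expectation over $X$ requires more justification than the paper provides, so your version is arguably the cleaner of the two.
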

\begin{proof}
	Considering the secrecy rate corresponding to the maximum of $M$ eavesdroppers, we have: $P_{out}(\eta_{th})=P(\eta_{SEE} < \eta_{th})$
	\begin{equation} \nonumber
	\begin{split}
	=& P\left[\frac{\log_2\left(1+X\right)-\max_{m \in M}\log_2\left(1+Y_m\right)}{Np+p_c} < \eta_{th}\right] \\
	=& P\left[ \max_{m \in M}Y_m>\frac{\left(1+X\right)}{2^{(Np+p_c)\eta_{th}}}-1\triangleq Q''\right] \\
	=& 1-P\left(\max_{m \in M}Y_m < Q''\right)=1-\prod_{m=1}^{M} P\left(Y_m < Q''\right)
	\end{split}
	\end{equation}
	where, $Y_m's$ are assumed independent. $P\left(Y_m < Q''\right)=\int_{0}^{\infty}\int_{0}^{Q''}f_{Y_m}(y)f_X(x)dydx$, can be evaluated in closed form similar to proof of Theorm 1 and is given by equation (7). 
\end{proof}
\begin{figure*}[h]
	\hrule
	\begin{equation} 
	P_{out}(\eta_{th})=1-\frac{e^{\alpha\{1-2^{(N.p+Pc)\eta_{th}}\}}}{\beta^{-N}(N-1)!}\sum_{n=0}^{N-1}\frac{\alpha^n}{n!}\left[\sum_{j=0}^{n}\binom{n}{j}\left(2^{(N.p+Pc)\eta_{th}}-1\right)^{n-j}\frac{\left(2^{(N.p+Pc)\eta_{th}}\right)^j(j+N-1)!}{\{\alpha.2^{(N.p+Pc)\eta_{th}}+\beta\}^{j+N}}\right]
	\end{equation}
	\begin{equation}
	P\left(Y_m < Q''\right)=1-\frac{e^{\beta\{1-2^{-(N.p+Pc)\eta_{th}}\}}}{\alpha^{-N}(N-1)!}\sum_{n=0}^{N-1}\frac{\beta^n}{n!}\left[\sum_{j=0}^{n}\binom{n}{j}\left(2^{-(N.p+Pc)\eta_{th}}-1\right)^{n-j}\frac{\left(2^{-(N.p+Pc)\eta_{th}}\right)^j(j+N-1)!}{\{\beta.2^{-(N.p+Pc)\eta_{th}}+\alpha\}^{j+N}}\right]
	\end{equation}
	\hrule
\end{figure*}
\vspace{-1em}
\section{Results and discussion}
In this section, we present the numerical results of the optimization problem discussed in section 2 and the OP of SEE discussed in section 3. In Fig. 1 (a), we plot energy efficiency (EE) as a function of $P_{max}$ of DA-ports for different values of $E^b_{min}$ and with $\Delta^b_k=\Delta^e=0.5$, $\tau^b_k=\tau^e=0.75$ $\forall k$ and $N=6$. We observe that EE of the system initially improves with $P_{max}$ but eventually gets saturated. We also note that EE decreases as $E_{min}$ of the IoT device increases. Moreover, a comparison is provided between a system without security (WoS) and that with security (WS). In Fig. 1(b), we have the results corresponding to the charge constraint (CC) of EHE, with $E^b_{k,min}=1mW$  $\forall k$. We observe that, in SWIPT environment it is actually beneficial to have both Bob and Eve as energy harvesting nodes. However, we note that restraining the eavesdropper from charging is possible only at the cost of energy efficiency of the system. In Fig. 2, we plot OP of SEE w.r.t transmit power of DA-ports, with $\sigma^2_b$=-20dBm and $\sigma^2_e$=-10dBm. Transmit power and number of DA-ports play an important role in the overall system performance. In Fig. 2(a), we observe that $P_{out}(\eta_{th})$ reduces significantly as the transmit power is increased and initially, a similar trend is observed with the number of DA-ports. However, this behaviour changes at higher power levels. In fact, the results reveal that in order to minimize the OP of SEE, signals need to be transmitted at lower power levels when there are larger number of DA-ports. Further, in Fig. 2 (b), it can be observed that $P_{out}(\eta_{th})$ also decreases when PS ratios of Bob and Eve are increased. Moreover, the system performs better when the PS ratio of Bob is higher in magnitude than that of the Eve.   

\begin{figure}[h]
	\centering
	\subfigure[]{\includegraphics[width=.493\columnwidth]{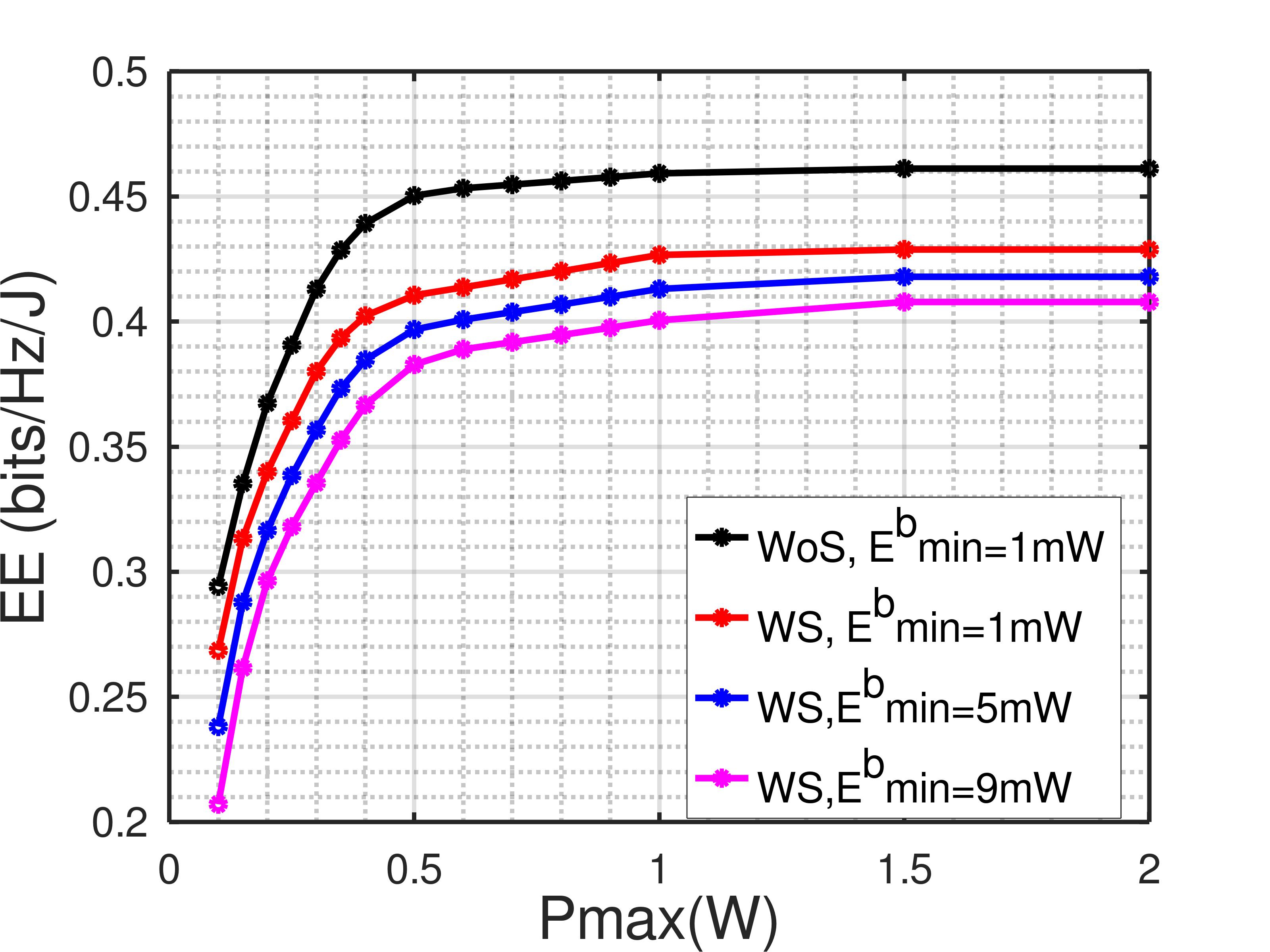}}
	\subfigure[]{\includegraphics[width=.493\columnwidth]{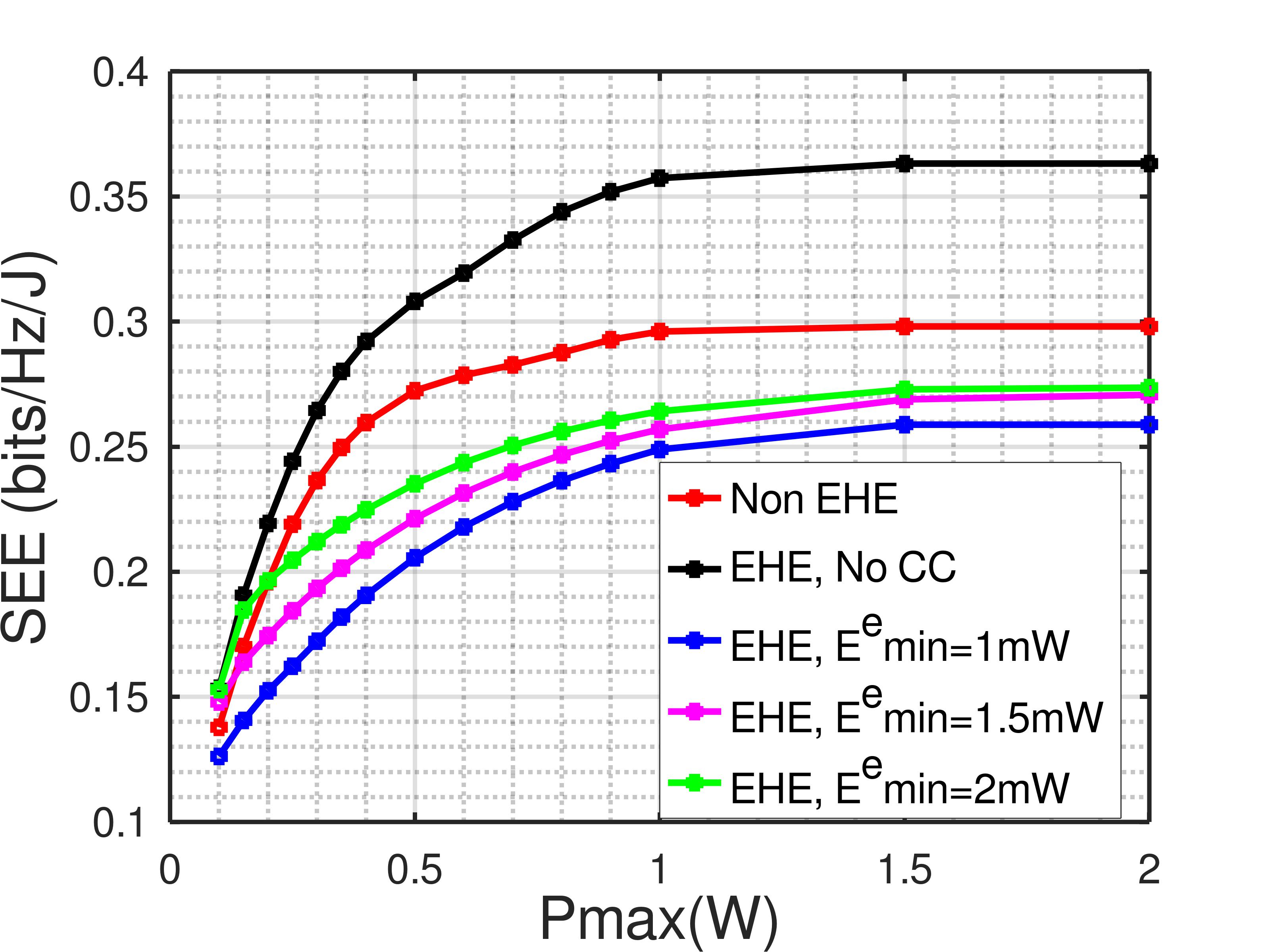}}
	\caption{(a) Enegy Efficiency (EE) w.r.t $P_{max}$ for different values of $E^b_{min}$ (b) SEE w.r.t $P_{max}$ in case of an EHE}
\end{figure}
\begin{figure}[h]
	\centering
	\subfigure[]{\includegraphics[width=.493\columnwidth]{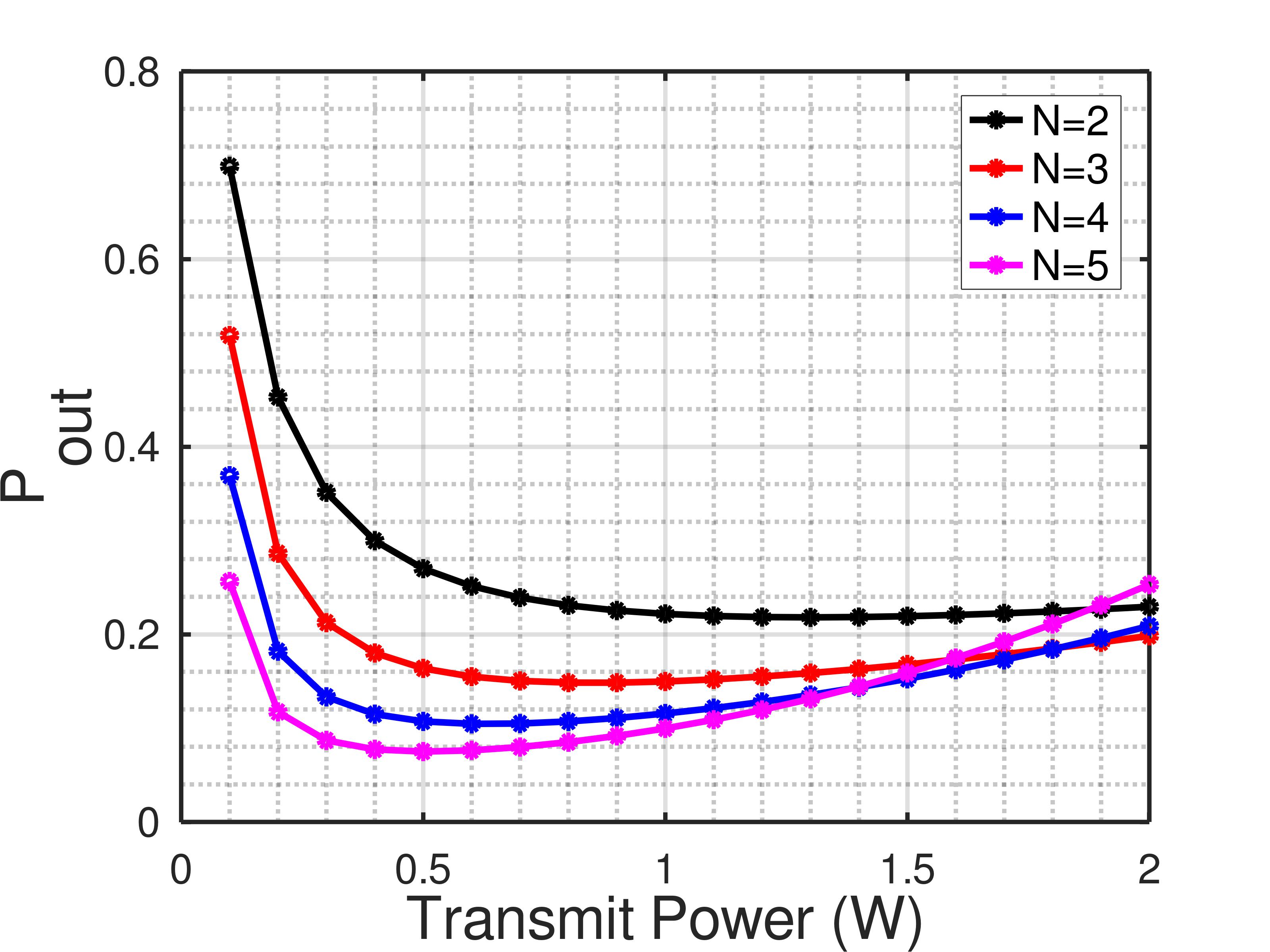}}
	\subfigure[]{\includegraphics[width=.493\columnwidth]{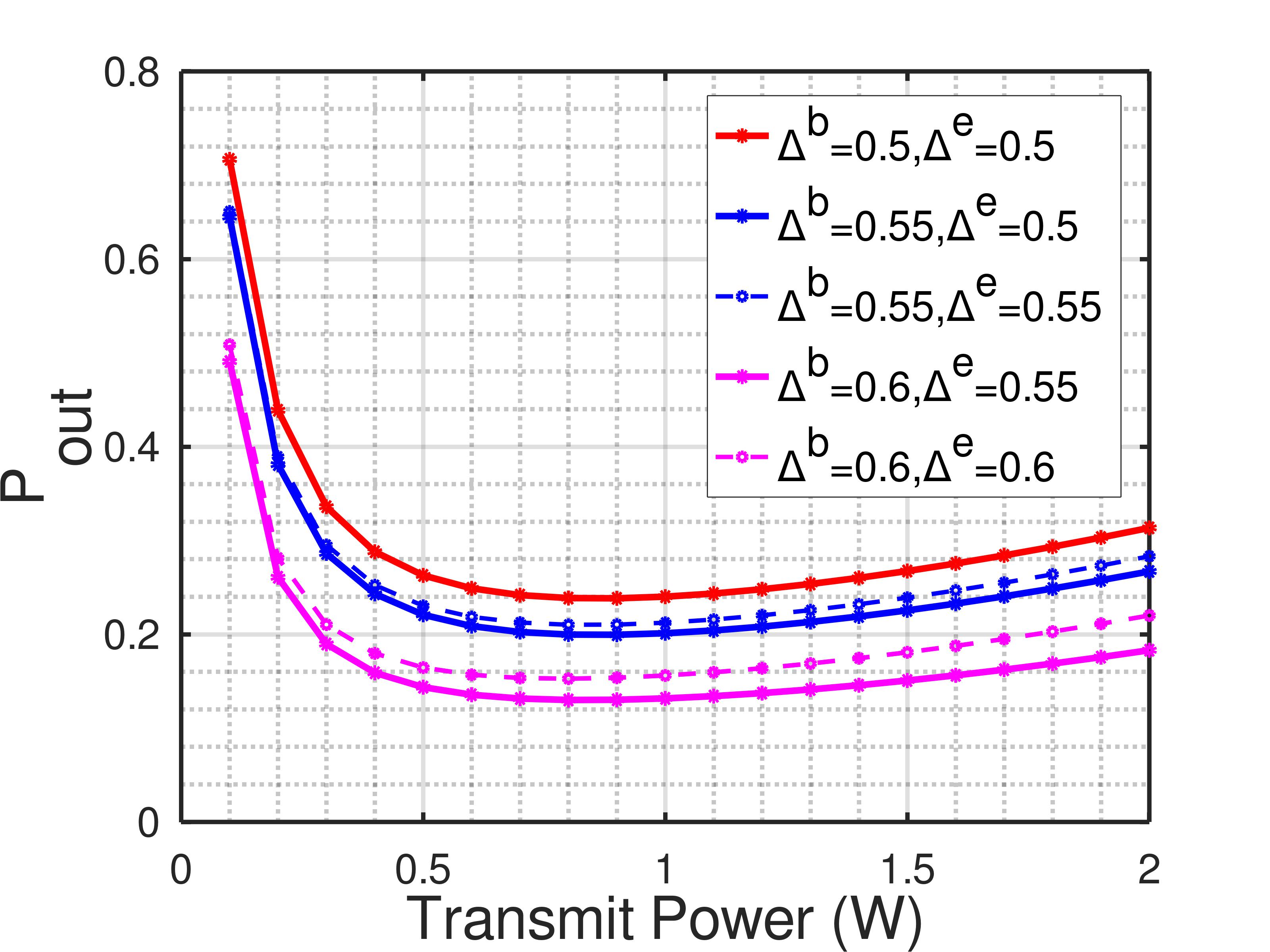}}
	\caption{(a) OP of SEE w.r.t transmit power for increasing $N$ (b) OP of SEE w.r.t transmit power for increasing $\Delta^b$ and $\Delta^e$} 
\end{figure}
\vspace{-1em}
\section{Conclusion}
In this paper, we studied SEE for SWIPT-in-DAS based IoT network. With the assumption of perfect CSI, our objective was to study the effect of an EHE over the maximization of SEE. 
Further, in an unknown CSI scenario, we characterized the system performance in terms of outage probability of SEE. For the blanket transmission scheme, we obtained a closed form expression for the OP of SEE. The theoretical results obtained were supported with the numerical computations.  


\bibliographystyle{IEEEtran}
\bibliography{references}

\end{document}